\newtheorem{lemma}{Lemma}[section]
\newtheorem{theorem}[lemma]{Theorem}
\newtheorem{corollary}[lemma]{Corollary}
\newtheorem{claim}[lemma]{Claim}
\theoremstyle{definition}
\theoremstyle{remark}
\title{The price of anarchy and stability in general noisy best-response dynamics\footnote{Part of this work has been done while at LIAFA,  Universit\'e Paris Diderot, supported by the French ANR Project DISPLEXITY.}}
\author[1]{Paolo Penna}
\affil[1]{Department of Computer Science. ETH Zurich, Switzerland\\
  \href{mailto:paolo.penna@inf.ethz.ch}{\texttt{paolo.penna@inf.ethz.ch}}}
\newcommand{\deltajobs}[1]{\underbrace{\delta,\ldots,\delta}_{#1}}
\newcommand{\Deltajobs}[1]{\underbrace{\Delta,\ldots,\Delta}_{#1}}
\newcommand{\PoA}{\textsc{PoA}\xspace}
\newcommand{\PoS}{\textsc{PoS}\xspace}
\newcommand{\ind}{\textsc{independent-logit}}
\newcommand{\asy}{\textsc{logit}}
\renewcommand{\citet}{\cite}
\begin{document}

\maketitle

\begin{abstract}
Logit-response dynamics \citep{AloNet10} are a rich and natural class of \emph{noisy} best-response dynamics.  In this work we revise the price of anarchy and the price of stability by considering the quality of long-run equilibria in these dynamics. Our results show that prior studies on simpler dynamics of this type can strongly depend on a \emph{synchronous} schedule of the players' moves.  In particular, a small noise by itself is not enough to improve the quality of equilibria as soon as other very natural schedules are used. 
 \end{abstract}

\section{Introduction}
Complex and distributed systems are often modeled by means of game dynamics in which agents (players) act spontaneously, and typically 
attempt to maximize their own benefit.  These dynamics can be regarded as distributed algorithms that are implemented by the players and have applications in Economics (e.g., markets), Physics (e.g., Ising model and spin systems), Biology (e.g., evolution of life), and in Computer Science (e.g., distributed protocols for routing and resource allocations by competing entities). As a result of such selfish behavior the system will reach some equilibrium that may or may not be optimal.
The \textsc{price of anarchy (\PoA)} \citep{KouPap09} and the \textsc{price of stability (\PoS)} \citep{AnsDasKleTarWexRou04}  have been introduced to quantify the efficiency loss caused by the players selfish behavior. Intuitively, the price of anarchy compares the \emph{worst} possible equilibrium to the optimum (the centrally designed outcome). Similarly, the price of stability compares the \emph{best} possible equilibrium with the optimum (to capture the loss caused by the inherent instability of optimal outcomes).  
The \PoA and the \PoS measure the performance of ``decentralized and anarchic'' distributed systems in which we simply let the players decide by themselves how to play, until they reach some equilibrium.

These two concepts seem naturally related to the question of explaining \emph{how} the players can choose between different equilibria. 
A remarkable observation made in game theory is that \emph{noise} allows the players to select good equilibria via a simple and natural dynamics. Also, results on the \PoA and \PoS for certain problems are sometimes based on equilibria that are ``unnatural'' because it is difficult for the players to reach or to remain in such equilibria.  This  leads naturally to reconsider the bounds on the \PoA and \PoS by restricting attention to ``reasonable'' equilibria. 

In this work we follow this approach by considering the equilibria that arise in certain natural \emph{noisy best-response} dynamics. In these dynamics, players keep updating their strategies and, at low noise regime, they tend to select strategies that deliver higher payoffs (see Section~\ref{sec:prely} for the formal model). The equilibria that are selected in the long-run are the so-called \emph{stochastically stable} states. That is, the states that in the stationary distribution have positive probability as the noise vanishes.\footnote{In this work we consider  \emph{pure} Nash equilibria, that is, states in which each player chooses one strategy and unilateral deviations are not beneficial. In general games, stochastically stable states are not necessarily Nash equilibria, even for the class of potential games (which is the main focus of this work).}  One of the most popular and studied of such dynamics is the following one: 
\begin{quote}
	\emph{Logit dynamics \citep{Blu93}.} At each time step  \emph{one randomly chosen player} updates her strategy according to a noisy best-response rule.
\end{quote}
On the one hand, for potential games, the long-run outcomes can be simply characterized   as the Nash equilibria \emph{minimizing the potential} of the game \cite{Blu98}.
On the other hand these dynamics seem to require a particular \emph{schedule} of the players' moves (who is allowed to update at each step and who stays put).  The authors in \citet{AloNet10} analyze the generalization in which  players' moves are not necessarily ``synchronized'':
\begin{quote}
	\emph{Independent learning logit-response dynamics \citep{AloNet10}.} At each time step a \emph{subset} of randomly chosen players update their strategies according to a noisy best response-rule (the same rule of logit dynamics). Every subset has some positive probability of being selected (e.g., when every player independently decides to revise his/her strategy with some probability $p\in (0,1)$).
\end{quote}
While the characterization of \cite{Blu98} in terms of potential minimizers is no longer true, \citet{AloNet10}  characterize long-run outcomes as the states minimizing a \emph{stochastic potential}. 
This means that the set of equilibria selected by the two dynamics is \emph{different} \cite{AloNet10}.  It is therefore natural to ask to what extent the  \emph{schedule} of the players' moves (the \emph{revision process}, in the game theory terminology) can affect the \emph{quality} of the selected equilibria in these dynamics. The logit-response dynamics has been proven to maintain many of the nice properties of logit dynamics, including the selection of good equilibria in coordination games, or convergence to Nash equilibria under certain conditions \cite{AloNet10,MarSha12,CauDurGauTou14,AloNet15}.

\subsection{Our Contribution}
This work presents a refined version of the quality of equilibria in games (\PoA and \PoS) based on `decentralized' dynamics with noise, that is, the \emph{logit-response with independent learning} described above. This approach has two advantages: 
\begin{enumerate}
	\item It suggests an intuitive definition of `reasonable' equilibria (robust to small mistakes and easy to reach without specific schedules of players' moves) to quantify the efficiency loss. 
	\item It allows to understand the role of \emph{synchronization} in such dynamics (prior studies considered the quality of equilibria in `synchronous' logit dynamics). 
\end{enumerate} 
Specifically, we propose to study a natural 
concept of \ind~\PoA (and \PoS) which compares the worst (respectively, best) long-run outcomes of independent learning logit-response dynamics with the optimum. That is, the classical \PoA (respectively, \PoS) restricted to the set of stochastically stable states \cite{AloNet10}. The analogous notions for logit dynamics, say the \asy~\PoA and \asy~\PoS, have been already studied in the literature under different names.\footnote{\citet{AsaSab09} call the \PoA restricted to potential minimizers \textsc{Inefficiency Ratio of Stable Equilibria (IRSE)}, while \citet{KawMak13} use the terms \textsc{Potential Optimal Price of Anarchy (PO\PoA)} and \textsc{of Stability (POPoS)}. In this work we prefer to use the terms \asy~\PoA and \asy~\PoS to emphasize the comparison between logit and independent-learning logit-response dynamics.} A direct comparison between these notions is informative of the importance of the schedule of the moves.

We study two classes of problems that have different features: \emph{load balancing} games which provide a simple model of congestion, and \emph{broadcast network design} games in which players share the cost of the used resources. Intuitively, in the first case players dislike to choose the same resources (because this creates  congestion and players experience  some delay), while in the second case they like to do the opposite (because the more players use a resource the less each of them pays for it).

We first consider \emph{load balancing} games whose analysis for logit dynamics has been done by \citet{AsaSab09}.
We obtain the tight bounds  shown in Table~\ref{tbl:results} which  illustrate of how different the two dynamics can perform. 
In particular, though the prior bounds on logit dynamics show that the quality of equilibria \emph{improves} compared to ``generic'' Nash equilibria, the opposite happens for independent-learning dynamics. 
This tells us that the improvement of the quality of the equilibria is due to the \emph{combination} of the players' response rule and of the possibility of \emph{synchronizing} their moves. Thus, the bounds based on potential minimizers are in general not ``robust'' in the sense that they depend critically on this assumption.
Also, contrary to logit dynamics, the stochastically stable states are not necessarily Nash equilibria (in fact they are far from Nash equilibria in terms of cost).

\begin{table} 
	\centering
	\begin{tabular}{|lr|c|c|}\hline
		\multicolumn{2}{|c|}{Noise with Synchronization} & Deterministic & Noise without Synchronization\\
		\hline\hline
		\multicolumn{4}{|c|}{Worst Equilibrium}\\
		\multicolumn{2}{|c|}{\asy~\PoA \cite{AsaSab09}} & \PoA  \cite{FinHor79} & \ind~\PoA \\ 
		$\geq 19/18$ & $\leq 3/2$  & $2\Big(1 - \frac{1}{m+1}\Big)$ &  $m$ \\	
		\hline
		\hline
		\multicolumn{4}{|c|}{Best Equilibrium}\\
		\multicolumn{2}{|c|}{\asy~\PoS}  & \PoS & {\ind~\PoS}\\
		\multicolumn{2}{|c|}{1} & 1 & {$2\Big(1 - \frac{1}{m+1}\Big)$} \\
		\hline
	\end{tabular} 
	\caption{Bounds for load balancing games: The rightmost column contains our contributions, and $m$ denotes the number of machines. The bounds on (\asy) \PoS are folklore.}
	\label{tbl:results}
\end{table}

We then consider broadcast \emph{network design} games studied in \cite{KawMak13,MamMih15}  for potential minimizers (logit dynamics), and show that in this case synchronization is not essential to obtain optimal configurations. First, for the case of  parallel links between a common source and a common destination (links represent resources of different costs) players will select the optimum if there is a unique shortest link (Theorem~\ref{th:broadcast:identical-locations}). Note that in these instances the price of anarchy is rather high ($\PoA=n$), while our result says that the restriction to stochastically stable states is instead optimal ($\ind~\PoA=1$). Finally, we show that the result does not extend to different sources since in some instances $\ind~\PoA>\asy~\PoA$ (Theorem~\ref{th:network-design-games}). Also in this case stochastically stable states are not Nash equilibria.

\subsection{Further Related Work}
The impact of the schedule of the players' moves on the quality of the equilibria has been studied earlier. The approach by \citet{MarSha12} is closely related to the one here, and it consists of deriving sufficient conditions for which the stochastically stable states of  logit-response dynamics  are the same as those of logit dynamics (potential minimizers). 
The impact of simultaneous moves on (deterministic) best-response dynamics is studied by \citet{FanMosSko12}. The performance of various \emph{concurrent} dynamics in which all players are active at each time step is studied in e.g. \citet{FotKapSpi10,AckBerFisHoe09,KlePilTar09,AulFerPasPenPer13}.

Several works refine the \PoA and the \PoS by restricting to certain type of equilibria. The \emph{price of stochastic anarchy} by \citet{ChuKatPhrRot08,ChuPyr09} is similar to ours in the sense that it considers stochastically  stable states though for a (concurrent) imitation dynamics.  
The authors in \citet{AsaSab09} were the first to restrict to potential minimizers (the long-run equilibria of logit dynamics)  and their analysis on atomic congestion games and load balancing games shows that prior bounds on the \PoA for these games can be too pessimistic. The work \citet{KawMak13} extends this approach to the \PoS and shows similar results for  broadcast network design games (where \textsc{logit PoS = logit PoA} up to a constant factor). The  restriction to the ring topology is considered in \citet{MamMih15} (together with other variants of \PoS and \PoA).
The authors in \citet{AulFerPasPer13} suggest to consider the expected social cost in logit dynamics and therefore the stationary distribution  as the equilibrium concept.  A new natural dynamics converging to  Nash equilibria with optimal social welfare in general games has been proposed in \citet{PraPey12}.
The robustness of the \PoA and \PoS bounds has been investigated by \citet{ChrKouSpi11} and \citet{Rou09} who considered approximate and correlated equilibria, respectively.

Stochastic stability is the tool to show that in the coordination game players select the risk dominant strategy (see \cite{AloNet10} for discussion and references). Recent works have pointed out that in general revision processes (including independent-learning) logit-response dynamics do  \emph{not} converge to Nash equilibria in potential games (\citet{CauDurGauTou14,AloNet15} discuss this issue explicitly).

\subsection{Preliminary Definitions}\label{sec:prely}
This section introduces the necessary tools from \cite{AloNet10}, while  the next subsection presents the revised \PoA and \PoS notions. Let $u_i(s_i,s_{-i})$ denote the utility of player $i$ when she chooses strategy $s_i$ and the other players' strategies are $s_{-i}$ (the latter is a vector of strategies). The combination of all these strategies is a state $s=(s_i,s_{-i})$. In logit-response dynamics, whenever a player is given the opportunity to revise her strategy, she will revise her current strategy with a strategy $s_i$ chosen according to the \emph{logit choice function} 
\begin{equation}
\label{eq:logit-choice}
p_i(s_i, s_{-i}) = \frac{e^{\beta u_i(s_i,s_{-i})}}{\sum_{s_i'\in S_i} e^{\beta u_i(s_i',s_{-i})}}
\end{equation}
where $s_{-i}$ is the current strategy profile (the strategies chosen by the others) and $\beta$ is a parameter representing the inverse level of noise in the players' decision ($0<\beta<\infty$), and $S_i$ is the set of strategies of player $i$.  The dynamics is defined as follows:
\begin{itemize}
	\item Select a subset $J$ of players according to a probability distribution $q$ defined on the subsets of players.
	\item Every selected player $i\in J$ chooses a new strategy $s_i$ with probability given by the logit choice function  \eqref{eq:logit-choice}.
\end{itemize}
Note that the dynamics depends on the \emph{revision process} $q$, and different revision processes yields different (logit-response) dynamics: 
\begin{itemize}
	\item Logit dynamics \cite{Blu93,Blu98} correspond to  \emph{asynchronous learning} in which only one player at the time is allowed to revise (i.e, $q(J)>0$ if and only if $J=\{i\}$ for some $i$). 
	\item In logit-response dynamics with \emph{independent learning} every subset of players has non-zero revising probability (i.e. $q(J)>0$ for every subset $J$). 
\end{itemize}

A state $s$ is \emph{stochastically stable} if it has positive probability in the limit invariant distribution as the noise goes to zero, 
\[
\lim_{\beta \rightarrow \infty} \mu^\beta(s)>0,
\] 
where $\mu^\beta$ is the stationary distribution of the dynamics with parameter $\beta$.\footnote{It is well known that the dynamics described above define an ergodic Markov chain over the possible strategy profiles for every $0 < \beta < \infty$. That is, the stationary distribution $\mu^{\beta}$ exists for all $\beta$.}

An edge $(s,s')$ is \emph{feasible} if $R_{s,s'}:=\{J| \ q(J)>0 \mbox{ and } s_k=s'_k \forall k\not \in J\}$ is non-empty (note that $R_{s,s'}$ consists of the subsets of players potentially leading from $s$ to $s'$). For any such feasible transition,  its \emph{waste} is defined as
\begin{align*}
	W_{s,s'} :=& \min_{J \in R_{s,s'}} W^{(J)}_{s,s'}, \mbox{ where} \\
	W^{(J)}_{s,s'}:=&\sum_{j\in J} \left( \max_{s_j''\in S_j} u_j(s''_j,s_{-j}) - u_j(s'_j,s_{-j})\right).
\end{align*}
The waste of a directed tree (or of path) $T$ is simply the sum of  the waste of its edges:
\[
W(T) :=\sum_{(s,s')\in T} W_{s,s'}.
\]
The \emph{stochastic potential} of a state $s$ is 
\[
W(s):=\min_{T\in \mathcal{T}(s)} W(T)
\]
where $\mathcal{T}(s)$ is the set of all directed trees $T$ in which every state $s''$ has a unique path to $s$.  That is, $T$ is a tree directed towards $s$ with each edge being feasible.

\begin{theorem}[\cite{AloNet10}]\label{th:stochastic-potential-minimizer}
	Consider the logit-response dynamics (with any revision process). A state is stochastically stable if and only if it minimizes $W(s)$ among all states.
\end{theorem}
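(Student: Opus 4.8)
The plan is to run the standard Freidlin--Wentzell / Markov chain tree argument (in the style of Young's analysis of stochastic stability) on the family of ergodic chains $\{P^\beta\}_\beta$ defined by the dynamics. The target estimate is that, up to multiplicative factors bounded away from $0$ and $\infty$ uniformly in $\beta$, one has $\mu^\beta(s)\asymp e^{-\beta W(s)}$; the states that retain positive mass as $\beta\to\infty$ are then exactly the minimizers of the stochastic potential $W$. This splits into three steps: (i) an exponential estimate for each transition probability in terms of the waste; (ii) the Markov chain tree theorem; (iii) a dominant-term analysis of the resulting tree sum.

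For step (i), fix an edge $(s,s')$. A transition $s\to s'$ can occur only through a revising subset $J$ with $q(J)>0$ and $s_k=s'_k$ for all $k\notin J$, i.e.\ through some $J\in R_{s,s'}$, and it occurs with probability $q(J)\prod_{j\in J}p_j(s'_j,s_{-j})$ (the players in $J$ revise simultaneously, each reacting to the current profile $s_{-j}$). By \eqref{eq:logit-choice}, the denominator $\sum_{s_j''\in S_j}e^{\beta u_j(s_j'',s_{-j})}$ equals $e^{\beta\max_{s_j''\in S_j}u_j(s_j'',s_{-j})}$ times a factor that stays between $1$ and $|S_j|$ and converges (to the number of maximizers) as $\beta\to\infty$; hence
\[
  p_j(s'_j,s_{-j}) \;=\; c_j(\beta)\cdot e^{-\beta\left(\,\max_{s_j''\in S_j}u_j(s_j'',s_{-j})\;-\;u_j(s'_j,s_{-j})\,\right)},
\]
with $c_j(\beta)$ bounded away from $0$ and $\infty$ for all large $\beta$. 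Taking the product over $j\in J$ gives a term of order $e^{-\beta W^{(J)}_{s,s'}}$; summing over the finitely many $J\in R_{s,s'}$, the sum is governed by those $J$ attaining $\min_{J}W^{(J)}_{s,s'}=W_{s,s'}$. Therefore $P^\beta_{s,s'}=a_{s,s'}(\beta)\,e^{-\beta W_{s,s'}}$ with $a_{s,s'}(\beta)$ bounded away from $0$ and $\infty$ for large $\beta$; moreover $P^\beta_{s,s'}>0$ precisely for feasible edges, since a selected player may also ``revise'' to her current strategy and the logit rule puts positive probability on every strategy.

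For steps (ii)--(iii): the chain is ergodic for each finite $\beta$, so the Markov chain tree theorem gives $\mu^\beta(s)=Z_\beta^{-1}\sum_{T\in\mathcal T(s)}\prod_{(x,y)\in T}P^\beta_{x,y}$ with $Z_\beta=\sum_{s'}\sum_{T\in\mathcal T(s')}\prod_{(x,y)\in T}P^\beta_{x,y}$; because $P^\beta_{x,y}=0$ on infeasible edges, only trees all of whose edges are feasible contribute, and these are exactly the trees of $\mathcal T(s)$ defined before the statement. By step (i), $\prod_{(x,y)\in T}P^\beta_{x,y}=A_T(\beta)\,e^{-\beta W(T)}$ with $A_T(\beta)$ bounded away from $0$ and $\infty$, and since the state space, hence the number of such trees, is finite, $\sum_{T\in\mathcal T(s)}\prod_{(x,y)\in T}P^\beta_{x,y}=\Theta\big(e^{-\beta W(s)}\big)$ and $Z_\beta=\Theta\big(e^{-\beta W^*}\big)$, where $W^*:=\min_{s'}W(s')$ and the $\Theta$-constants are independent of $\beta$. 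Thus $\mu^\beta(s)=\Theta\big(e^{-\beta(W(s)-W^*)}\big)$, so $\lim_{\beta\to\infty}\mu^\beta(s)>0$ when $W(s)=W^*$ and $\mu^\beta(s)\to 0$ when $W(s)>W^*$, which is the asserted equivalence. (In fact the $c_j(\beta)$, hence the $A_T(\beta)$, converge as $\beta\to\infty$, so the limit $\lim_\beta\mu^\beta(s)$ exists: it is a positive constant if $W(s)=W^*$ and $0$ otherwise.)

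The main obstacle is step (i): turning the heuristic $P^\beta_{s,s'}\asymp e^{-\beta W_{s,s'}}$ into an honest two-sided bound with constants genuinely free of $\beta$. Two features of the model require care, and they are exactly what the definition of waste records: the normalizing denominator in \eqref{eq:logit-choice} is what converts a raw payoff difference into the regret term $\max_{s_j''\in S_j}u_j(s_j'',s_{-j})-u_j(s'_j,s_{-j})$ (so that the exponent is the waste, not something smaller), and the same edge $(s,s')$ may be produced by several revising subsets $J$, possibly strictly larger than $\{k: s_k\neq s'_k\}$, which is precisely why $W_{s,s'}$ is the minimum of $W^{(J)}_{s,s'}$ over $J\in R_{s,s'}$. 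Once the transition estimates are established with $\beta$-free constants, steps (ii)--(iii) are routine bookkeeping with the tree theorem.
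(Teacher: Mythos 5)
The paper does not prove this theorem itself; it is imported verbatim from \cite{AloNet10}, whose proof is exactly the argument you give: identify the waste $W^{(J)}_{s,s'}$ as the exponential decay rate (resistance) of each transition of the perturbed chain, and then apply the Freidlin--Wentzell/Young tree characterization of the stationary distribution. Your reconstruction is correct, and it handles the two points that actually matter here -- the normalizing denominator in \eqref{eq:logit-choice} turning payoffs into regrets, and the minimization over all revising subsets $J\in R_{s,s'}$ (including players who revise back to their current strategy) -- so nothing further is needed.
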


It has been observed recently that, even for potential games\footnote{In this work we deal only with weighted potential games, that is, games that admit a vector $w$ and a potential function $\phi$ such that $u_i(s)-u_i(s')=(\phi(s')-\phi(s))w_i$ for all $i$ and for all $s$ and $s'$ that differ in exactly player $i$'s strategy.}, stochastically stable states are not necessarily Nash equilibria \cite{CauDurGauTou14,AloNet15}.

In our proofs we shall find it convenient to reason about the existence of \emph{zero-waste} paths between two states. Note that a zero-waste transition $(s,s')$  corresponds to a multiple best-response: Since $W_{s,s'}^{(J)}=0$ all players in $J$ 
simultaneously best-respond to the strategies in $s$ and the resulting profile is $s'$ (for $J=\{i\}$ we have a best-response).

\subsubsection{Revised Price of Anarchy and Stability}
We want to quantify the efficiency loss that occurs when we relax logit dynamics to  independent learning revision processes. Note that the set of stochastically stable states does not depend on the particular revision process $q$ satisfying the definition of independent learning (the stochastic potential depends only on the feasible transitions which do not change if $q$ and $q'$ are both independent learning revision processes). Given an instance $I$ (a game) of interests, and a corresponding  global cost function $cost$ mapping states into a nonnegative real, we define the following 
\[
\ind~\PoA(I) = \frac{\max_{s \in STABLE} cost(s,I)}{\min_s cost(s,I)}
\]
where  $STABLE$ denotes the  stochastically stable states for the logit-response dynamics with independent learning. By replacing `$\max$' with `$\min$' we obtain the notion  of $\ind~\PoS$. The original $\PoA$  and $\PoS$ are defined by replacing $STABLE$ with the set of all possible Nash equilibria. The restriction to logit dynamics, \asy-\PoA and \asy-\PoS, have $STABLE$ equal to the states minimizing the potential of the game. We generally consider classes of games and extend both definitions by taking the supremum over all instances: For a class of games $\mathcal C$ the $\ind~\PoA$ is defined as $\sup_{I\in \mathcal C} \ind~\PoA(I)$; $\ind~\PoA$ is defined as $\sup_{I\in \mathcal C} \ind~\PoS(I)$. 

The following easy result (whose elementary proof is given in Appendix~\ref{app:non-Nash}) is useful 
for games in which convergence to Nash equilibria is not guaranteed (like those considered here). 

\begin{theorem}\label{th:revPoS:general}
	For any game, the set of stochastically stable states in independent learning logit-response dynamics must contain at least one Nash equilibrium. Therefore $$\ind~\PoS \leq \PoA.$$
\end{theorem}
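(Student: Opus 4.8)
The plan is to exhibit, in every instance, a pure Nash equilibrium that is a global minimiser of the stochastic potential $W$; by Theorem~\ref{th:stochastic-potential-minimizer} such a state is stochastically stable, which proves the first assertion, and the bound $\ind~\PoS\le\PoA$ is then immediate. Write $W^*:=\min_s W(s)$, so that (again by Theorem~\ref{th:stochastic-potential-minimizer}) $STABLE$ is precisely the non-empty set of states $s$ with $W(s)=W^*$.

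The technical core is a tree-surgery lemma: \emph{if $s\in STABLE$, $s'\ne s$, and the transition $(s,s')$ is feasible with $W_{s,s'}=0$, then $s'\in STABLE$}. I would prove it by taking a tree $T\in\mathcal T(s)$ with $W(T)=W^*$. Since $s'\ne s$ and $T$ is oriented towards $s$, the $T$-path leaving $s'$ begins with a unique edge $(s',t)$, and since $T$ is acyclic neither $t$ nor $s$ lies in the sub-tree of $T$ hanging below $s'$. Hence deleting $(s',t)$ and inserting $(s,s')$ produces a directed tree oriented towards $s'$ in which every edge is feasible, i.e.\ an element $T'\in\mathcal T(s')$, and $W(T')=W(T)-W_{s',t}+W_{s,s'}=W^*-W_{s',t}\le W^*$. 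Thus $W(s')\le W^*$, and minimality forces $W(s')=W^*$, so $s'\in STABLE$.

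Now the independent-learning hypothesis lets us walk inside $STABLE$ towards a Nash equilibrium. Suppose $s\in STABLE$ is not a pure Nash equilibrium; pick a player $i$ with a strictly profitable deviation, let $\hat s_i\ne s_i$ be a best reply of $i$ to $s_{-i}$, and let $s'$ be obtained from $s$ by switching $i$ to $\hat s_i$. Under independent learning $q(\{i\})>0$, so the edge $(s,s')$ is feasible; moreover $W_{s,s'}\le W^{(\{i\})}_{s,s'}=\max_{s_i''}u_i(s_i'',s_{-i})-u_i(\hat s_i,s_{-i})=0$ since $\hat s_i$ is a best reply, and $s'\ne s$. By the surgery lemma $s'\in STABLE$, and by construction the step is a strict single-player improvement, $u_i(s')>u_i(s)$. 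Iterating produces a sequence of states all in $STABLE$ along which some player's payoff strictly increases at each step; since the games under consideration are (weighted) potential games no such sequence can be infinite, so the process halts at a state $\bar s\in STABLE$ admitting no improving deviation --- that is, a pure Nash equilibrium, as required.

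Finally, for any instance $I$ with cost function $cost$, letting $N$ be its set of pure Nash equilibria and $\bar s$ the equilibrium just produced, we have $\min_{s\in STABLE}cost(s,I)\le cost(\bar s,I)\le\max_{s\in N}cost(s,I)$; dividing through by $\min_s cost(s,I)$ gives $\ind~\PoS(I)\le\PoA(I)$, and the supremum over $I$ in the class gives $\ind~\PoS\le\PoA$. The point needing care is the surgery lemma --- verifying that excising one edge of the optimal tree and grafting on the zero-waste edge leaves intact the ``oriented-towards-the-new-root'' property and feasibility, and creates no cycle; everything else is routine. (A secondary subtlety is that the termination argument uses the finite-improvement property, automatic for the potential games studied here, rather than holding for a completely arbitrary game.)
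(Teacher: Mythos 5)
Your proof is correct and takes essentially the same approach as the paper's: a zero-waste best-response path from a non-Nash state to a Nash equilibrium, combined with a tree-rerouting argument showing that the stochastic potential $W$ cannot increase along such a path. The paper performs the surgery on the whole path at once rather than edge by edge, and it relies on the same implicit restriction you flag, namely that termination of the best-response sequence uses the finite-improvement property of the (weighted) potential games considered.
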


\section{Load Balancing Games}
We consider the load balancing problem which is defined as follows. There are $m$ identical machines and $n$ jobs, each job having its own weight. Each job is controlled by a player who can choose one machine (strategy). The strategies determine an allocation and the load for each machine (sum of the weights of jobs allocated to the machine). Naturally, each player aims at having her job on a machine with minimal load (so the payoff is defined as the negative of the load of the chosen machine).  The cost of an allocation is the so-called \emph{makespan}, that is, the maximum load over all machines.

\subsection{Independent Logit \PoA}

In this section we show that the \PoA in independent learning dynamics can be much worse than all prior bounds. In particular, it is higher than the classical \PoA which considers any Nash equilibria. This  means that there are some stochastically stable states which are \emph{not} Nash equilibria for this potential game. 

\begin{theorem}\label{th:ind-PoA}
	For load balancing games, 
	$$
	\ind~\PoA
	=m.
	$$
\end{theorem}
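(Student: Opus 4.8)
The plan is to establish the two inequalities $\ind~\PoA \le m$ and $\ind~\PoA \ge m$ separately. The upper bound should be essentially generic: in any load balancing instance with $m$ machines, any allocation that is not catastrophically bad has makespan at most $m$ times the optimum. Concretely, for \emph{any} state $s$, the makespan is at most the total weight $\sum_j w_j$ (put everything on one machine in the worst case), while the optimum makespan is at least $\frac{1}{m}\sum_j w_j$ (averaging argument), so $cost(s)/\min_{s'} cost(s') \le m$ for \emph{every} state, hence in particular for stochastically stable ones. So $\ind~\PoA \le m$ requires no understanding of the dynamics at all.

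The substance is the lower bound $\ind~\PoA \ge m$: I need an explicit family of instances in which some stochastically stable state has makespan approaching $m$ times the optimum. The natural candidate is an instance where the optimum spreads jobs evenly across all $m$ machines but there is a stochastically stable state piling (almost) everything onto a single machine. I would take, say, $m$ ``small'' jobs of weight $1$ (optimum puts one per machine, makespan $1$) together with the threat of concentration; more likely the right construction uses jobs of carefully chosen weights so that the all-on-one-machine profile, though far from Nash, has small \emph{stochastic potential}. The key tool is Theorem~\ref{th:stochastic-potential-minimizer}: I must exhibit a directed tree toward the bad state $s^{\mathrm{bad}}$ whose total waste is no larger than the stochastic potential of every other state, and for this I want to exploit \emph{zero-waste} edges, which under independent learning correspond to multiple simultaneous best-responses. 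The crucial point, and where synchronization is lost, is that when a large subset $J$ of players moves at once, each player in $J$ best-responds to the \emph{old} configuration $s_{-J}$; if all of them were on lightly loaded machines in $s$ and all jump to what looks individually best, they can collide and pile onto one machine at zero waste — a transition impossible in logit dynamics where moves are one at a time. I would use such a zero-waste ``coordinated stampede'' edge to route cheaply \emph{into} $s^{\mathrm{bad}}$, while arguing that escaping $s^{\mathrm{bad}}$, or reaching any good Nash equilibrium, costs strictly positive waste, so that the minimum-waste in-tree for $s^{\mathrm{bad}}$ is at least as cheap as that for any competitor.

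Carrying this out, the steps in order are: (i) fix the instance — I expect something like one or a few jobs of large weight $\Delta$ plus several jobs of unit weight $\delta$, tuned so the optimum is $\approx 1$ but the concentrated profile is $\approx m$; (ii) identify $s^{\mathrm{bad}}$ and check $cost(s^{\mathrm{bad}})/\mathrm{OPT} \to m$ (or $\to m-o(1)$, then let parameters grow); (iii) build a specific in-tree $T \in \mathcal T(s^{\mathrm{bad}})$ and bound $W(T)$, using zero-waste multi-player best-response edges wherever possible and charging the few nonzero edges carefully; (iv) lower-bound $W(s')$ for every other state $s'$ — in particular show every in-tree toward a good Nash equilibrium must pay at least $W(T)$, which is the delicate part because it requires a global argument over \emph{all} trees, typically via a potential/cut argument showing some edge on any path to $s'$ has waste bounded below. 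I expect step (iv) — proving that no competing state has strictly smaller stochastic potential — to be the main obstacle, since it is a minimization over an exponentially large family of spanning in-trees and needs a structural lower bound on waste (e.g. relating waste along a path to the change in the game's potential $\phi$, or a direct combinatorial lemma that any tree reaching a makespan-$1$ Nash equilibrium must traverse an edge of waste $\ge W(T)$). A convenient shortcut, if the instance is chosen so that \emph{every} path from $s^{\mathrm{bad}}$ to any Nash equilibrium crosses a bottleneck transition of fixed positive waste while the reverse direction is reachable at zero waste, is that $s^{\mathrm{bad}}$ then strictly minimizes $W(\cdot)$ and is the unique stochastically stable state, giving $\ind~\PoA \ge cost(s^{\mathrm{bad}})/\mathrm{OPT}$, which we drive to $m$.
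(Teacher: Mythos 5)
Your upper bound and, more importantly, your identification of the key mechanism are both right: under independent learning, a subset $J$ of players can simultaneously best-respond to the \emph{old} profile and collide onto one machine at zero waste, which is exactly what the paper exploits. But as written the lower bound has a genuine gap: you never fix the instance, and you explicitly leave open your step (iv) --- lower-bounding $W(s')$ for all competing states $s'$ so that the bad state wins the minimization of Theorem~\ref{th:stochastic-potential-minimizer}. That step, as you have framed it (a bottleneck/cut argument over all in-trees showing every tree into a good Nash equilibrium pays strictly positive waste), is not just the hard part --- it is the wrong target, and your instinct to use a few large $\Delta$-jobs plus many small $\delta$-jobs pushes you toward a construction where such a strict separation is what you would need (that is essentially the paper's \emph{\PoS} construction, Theorem~\ref{th:lower-bound-PoS}, which does require exactly that kind of two-sided waste accounting).

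The paper's \PoA argument dissolves step (iv) entirely by choosing the instance so that \emph{no} state needs to be beaten, only tied: take $m$ machines and $n=lm-1$ \emph{identical} unit jobs. Then in every state some machine has $\ge l$ jobs and some other has $\le l-1$, so one can route, by three zero-waste simultaneous-best-response transitions (dump the heavy machine onto the lightest one, stampede everyone onto the now-empty machine, then disperse to any target profile), from any state to any other state. Hence the zero-waste reachability graph is strongly connected, $W(s)=0$ for every state, and \emph{every} state --- including all jobs on one machine, of cost $lm-1$ against $\mathrm{OPT}=l$ --- is stochastically stable, giving ratio $m-1/l\to m$. So the fix for your proposal is not a cleverer global lower bound on competing stochastic potentials, but a choice of instance (identical jobs) under which the minimization is a tie at zero and characterization via Theorem~\ref{th:stochastic-potential-minimizer} requires nothing further.
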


\begin{proof}
	Since the upper bound is trivial, we shall prove only the lower bound.
	Consider the instance with $m$ machines and $n=lm-1$ identical jobs (say of size $1$), for a large positive integer $l$. The optimum has cost $l$ and it is given by any allocation in which one machine has $l-1$ jobs, and all other machines have $l$ jobs. 
	
	We claim that \emph{every} state is stochastically stable, including those in which \emph{all} jobs go to the same machine (and whose cost is thus  $lm-1$).  To prove our claim, we observe that in every state $s$ there is at least one machine $m_{high}$ with at least $l$ jobs and another machine $m_{low}$ with at most $l-1$ jobs. This implies that we can construct a zero-waste path from any state $s$ to every other state $s'$ as follows:
	\begin{enumerate}
		\item First all jobs in  machine $m_{high}$ move to  machine $m_{low}$;
		\item Now that machine $m_{high}$ is empty, all jobs move there;
		\label{itm:move-to-empty}
		\item Then from there all jobs move to the machine they occupy in $s'$ (some do not move because they are already in the correct machine). \label{itm:move-to-destination}
	\end{enumerate} 
	Note that each move is a best-response made simultaneously by a subset $J$ of players (because players move to a machine with strictly lower load). Therefore each of the three transitions has zero waste.  Since the stochastically stable state with all jobs in the same machine costs $lm-1$, and the optimum is $l$, we have shown that $$\ind~\PoA \geq m - 1/l$$ for all $l$. The theorem thus follows by taking $l$ large enough.  
	 \end{proof}

\subsection{Independent Logit \PoS}
In this section we show a tight bound on the \ind~\PoS. 

The strategy to prove that a certain state is \emph{not} stochastically stable is to prove that it is ``easy'' to leave this state. That is, there exists a tree of small waste which is directed \emph{from} this state to all other states. The next lemma says that such a tree can be obtained by finding a small-waste path leading to a state in which 
one machine has \emph{zero load}. 

\begin{lemma}\label{le:universal-zero-paths}
	Let $s$ be an allocation in which one machine has zero load. Then there exists a zero-waste path from $s$ to every other state.   
\end{lemma}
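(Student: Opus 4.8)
The plan is to build the path in two stages, both relying on one observation: moving a job onto a machine that is currently \emph{empty} is always a best response, because such a machine has load $0$ in the profile $s_{-j}$ and $0$ is the smallest load any player can face. Since ``best response'' in the definition of $W^{(J)}_{s,s'}$ is evaluated against the \emph{pre-move} profile $s_{-j}$, this stays true no matter how many other players simultaneously join the same empty machine. Combined with the fact that under independent learning every subset $J$ is a feasible revision set, this lets us realise ``several jobs move onto one empty machine'' as a single zero-waste transition.

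\emph{Stage 1 (collapse onto the empty machine).} Let $A$ be the machine with zero load in $s$. First I would take the single transition in which \emph{all} players move to $A$: since $A$ is empty in $s$, hence in each $s_{-j}$, every such move is a best response, so the transition has zero waste. This reaches the state $t_0$ in which all jobs sit on $A$ and every other machine is empty. \emph{Stage 2 (rebuild $s'$ machine by machine).} Enumerate the $m-1$ machines other than $A$ in any order $M_1,\dots,M_{m-1}$, and for each $k$ let $J_k$ be the set of jobs that $s'$ assigns to $M_k$. Then, for $k=1,\dots,m-1$, I would perform the transition in which exactly the players in $J_k$ move to $M_k$ (skipping the step when $J_k=\emptyset$). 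At the start of step $k$ the jobs that have left $A$ are exactly those in $J_1\cup\dots\cup J_{k-1}$, which is disjoint from $J_k$ because each job has a single destination in $s'$; hence every job of $J_k$ is still on $A$, and $M_k$ has not been touched, so it is empty. Thus each of these moves is a best response against the current profile and the transition has zero waste, and after step $k$ the machine $M_k$ holds exactly the jobs $s'$ assigns to it. Once all $m-1$ steps are done, each $M_k$ carries its $s'$-load and $A$ is left with exactly the jobs $s'$ assigns to it, so the final state is $s'$ and every edge used had zero waste.

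The step that needs the most care is not really an obstacle but the bookkeeping around the definition of waste: one must check that within each batch move every mover's new strategy is \emph{simultaneously} a best response to the common pre-move profile, which is exactly why I insist that the target machine be empty \emph{before} the batch moves onto it — a property that is maintained precisely because each machine is processed only once. The remaining points are routine: feasibility of an arbitrary subset holds because the dynamics uses independent learning, and the degenerate case $m=1$ is vacuous since then $s$ has no jobs and is the only state.
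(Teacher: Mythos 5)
Your proof is correct and takes essentially the same two-stage route as the paper's (which simply points back to Items~\ref{itm:move-to-empty}--\ref{itm:move-to-destination} in the proof of Theorem~\ref{th:ind-PoA}): collapse all jobs onto the empty machine in one zero-waste batch move, then redistribute them according to $s'$, using in both stages the observation that an empty machine is a best response against the pre-move profile $s_{-j}$. The only difference is that you redistribute machine-by-machine in $m-1$ transitions where the paper uses a single batch transition; this extra caution is harmless but unnecessary, since with all jobs on $A$ every target machine is already empty in each mover's pre-move profile.
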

The proof of this lemma is exactly the same argument used in the proof of Theorem~\ref{th:ind-PoA} above (Items~\ref{itm:move-to-empty}-\ref{itm:move-to-destination}).  Next, we  introduce some additional notation that will ease the proofs below.

\paragraph*{Compact notation for states and jobs allocations.}
Given a set of jobs weights, we partition these weights among the machines, without distinguishing among the machines and among players having identical weights. This results in a \emph{class} of states having the same cost. For instance, for jobs of size $\{2,1,1,1\}$ the partition
\[
[2],[1,1],[1]
\]
denotes a \emph{class} of allocations in which one machine gets a job of size $2$, does not matter which of the three machines, another machine gets two out of the three jobs of size $1$, and the remaining machine gets the remaining job of size $1$. An explicit representation would be in terms of which player goes to which machine: for players $\{p_1,p_2,p_3,p_4\}=\{2,1,1,1\}$ the states 
\begin{align*}
	[p_1],[p_2,p_3],[p_4],& & [p_1],[p_2,p_4],[p_3],& & [p_1],[p_3,p_4],[p_1],\ldots& \ldots [p_4],[p_2,p_3],[p_1]
\end{align*}
correspond to the partition $[2],[1,1],[1]$. Finally, we call a job of size $s$ an $s$-job. \qed

\begin{theorem}\label{th:lower-bound-PoS}
	For load balancing games,
	$$
	\ind~\PoS \geq 2\Big(1 - \frac{1}{m+1}\Big).
	$$
\end{theorem}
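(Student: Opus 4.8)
The plan is to exhibit, for each $m$, an instance (or a family of instances approaching the claimed ratio) of the load balancing game on $m$ machines all of whose stochastically stable states have makespan at least $2\bigl(1-\frac1{m+1}\bigr)\cdot\mathrm{OPT}$. The skeleton is the price-of-anarchy--extremal configuration of \cite{FinHor79}: a set of jobs admitting a Nash equilibrium $\bar s$ whose maximum load equals $\frac{2m}{m+1}$ times the optimal makespan, where the optimum is the perfectly balanced allocation. I would perturb the job sizes slightly so that $\bar s$ becomes a \emph{strict} Nash equilibrium: every player strictly prefers her current machine (also against re-choosing it), so no subset of players can best-respond out of $\bar s$. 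Consequently $\bar s$ has no outgoing zero-waste edge, i.e. $\{\bar s\}$ is a bottom strongly connected component of the ``zero-waste graph'' of simultaneous-best-response transitions, which is the structure that governs the stochastic potential $W(\cdot)$ via Theorem~\ref{th:stochastic-potential-minimizer}.

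\emph{Step 1 --- $\bar s$ is stochastically stable.} By Theorem~\ref{th:stochastic-potential-minimizer} it suffices to bound $W(\bar s)=\min_{T\in\mathcal T(\bar s)}W(T)$ above by $W(s)$ for every $s$, i.e. to build a cheap directed tree toward $\bar s$. I would assemble it from two zero-waste ingredients: synchronized ``rushes'' (a subset of players simultaneously best-responding onto a momentarily lightly loaded machine, as in the proof of Theorem~\ref{th:ind-PoA}), which from essentially any configuration reach a state with an empty machine, and then Lemma~\ref{le:universal-zero-paths}, which supplies a zero-waste path from any empty-machine state to $\bar s$. The only positive-waste edges of such a tree are those needed to leave the remaining ``sink'' configurations of the zero-waste graph (in particular the balanced optimum), and I would bound each of these escape costs and check that none of them exceeds what is paid to reach $\bar s$.

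\emph{Step 2 --- no near-optimal state is stochastically stable.} I would show $W(\bar s)<W(s)$ for every state $s$ of makespan below $2\bigl(1-\frac1{m+1}\bigr)\cdot\mathrm{OPT}$, in particular for every optimal state $o$. Equivalently: any directed tree toward $o$ must route $\bar s$ (and its whole zero-waste basin) to $o$, and since $\bar s$ is a strict equilibrium with a genuinely large ``slack'' this already forces a positive-waste edge whose cost strictly exceeds the total waste of the tree toward $\bar s$ constructed in Step~1. The feature of the instance that makes this possible is that the balanced optimum, although a deep potential well for \emph{single-player} dynamics, admits a \emph{cheap synchronized exit}: a rush of several players onto a slightly lighter machine leaves the optimum with waste only of the order of the perturbation scale, after which zero-waste moves drop the dynamics into $\bar s$'s basin -- whereas exiting $\bar s$ costs at least a fixed positive amount. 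Together with Theorem~\ref{th:revPoS:general} (which, incidentally, forces $\bar s$ itself to be a Nash equilibrium, consistent with the construction), this shows every stochastically stable state has makespan $\ge 2\bigl(1-\frac1{m+1}\bigr)\cdot\mathrm{OPT}$, and letting the perturbation vanish (and scaling to integers) gives the stated bound.

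The hard part is Step~2: excluding \emph{all} good states rather than just the balanced optimum, and -- more subtly -- arranging that the optimum can be exited toward $\bar s$ more cheaply than $\bar s$ can be exited toward the optimum. This is genuinely delicate because in load balancing any state with a zero-load machine is zero-waste-connected to everything via Lemma~\ref{le:universal-zero-paths}, which tends to make balanced configurations cheaply reachable and hence stochastically stable; the job sizes must be chosen so that the imbalanced equilibrium is ``sticky'' (a strict equilibrium with positive escape cost) while unbalancing the optimum through a synchronized move is ``cheap.'' Getting the bookkeeping of these competing escape costs to come out in favour of $\bar s$ -- and showing $\{\bar s\}$ (or a set of equally bad configurations) is the unique bottom component of minimal stochastic potential -- is the crux of the argument.
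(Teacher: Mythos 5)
Your high-level architecture matches the paper's: exhibit an instance whose only stochastically stable states have makespan $2\bigl(1-\frac{1}{m+1}\bigr)$ times the optimum, compare stochastic potentials by tree surgery, use synchronized zero-waste ``rushes'' plus Lemma~\ref{le:universal-zero-paths} to reach the bad state cheaply, and make the bad equilibrium costlier to leave than the optimum. But the concrete construction is missing, and the starting point you propose --- a slightly perturbed Finn--Horowitz \PoA-extremal instance --- does not work; in fact it proves the opposite. Take $m=2$ and jobs $\{2,2,1+\epsilon,1+\epsilon\}$, so that the bad equilibrium $\bar s=[2,2],[1+\epsilon,1+\epsilon]$ is strict. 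Exiting $\bar s$ costs only $2\epsilon$ (move one $2$-job onto the light machine), after which zero-waste best responses reach the balanced optimum. Exiting the balanced optimum $[2,1+\epsilon],[2,1+\epsilon]$, by contrast, costs at least $1+\epsilon$ for \emph{any} nonempty set of movers, since the waste of a joint move is the sum of the individual losses measured against the current profile and every relocating player loses a constant. So here the optimum is the deep well and $\bar s$ is the shallow one: the stochastically stable states are the good ones. More generally, the \PoA-extremal equilibrium is tight (some player is indifferent), so making it strict forces its escape cost down to the perturbation scale; you cannot obtain the ``fixed positive'' escape cost for $\bar s$ that your Step~2 relies on while keeping the ratio near $2m/(m+1)$.

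The missing idea --- which you correctly flag as the crux but do not supply --- is how to give the optimum a \emph{zero-waste} synchronized exit while surrounding the bad equilibria with a positive barrier. The paper achieves this by adding $lm$ jobs of infinitesimal size $\delta$ summing to $\Delta=\frac{m}{m+1}$, alongside two jobs of size $\Delta-\delta$ and $m-2$ jobs of size $\Delta$. In the optimum every machine holds one large job plus $l$ tiny jobs; the machines carrying a $(\Delta-\delta)$-job have load $1-\delta$, strictly below the others, so all tiny jobs can simultaneously best-respond onto them at zero waste, and a short cascade then empties a machine, whence Lemma~\ref{le:universal-zero-paths} applies. In the bad equilibrium $[\Delta-\delta,\Delta-\delta],[\delta,\dots,\delta],[\Delta],\dots,[\Delta]$ every machine other than the first has load exactly $\Delta$, so every deviation by any subset costs at least $\delta$. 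The final accounting compares $W(s)\ge|APX|\,\delta$ for $s\notin APX$ against $W(s_{apx})\le(|APX|-1)\delta$, which additionally requires classifying all Nash equilibria (they are exactly $OPT\cup APX$) and connecting every non-equilibrium state to one of them at zero waste --- a step your sketch does not address. As it stands, the proposal identifies the right shape of the argument but omits the construction on which everything hinges, and the instance it does suggest would yield $\ind~\PoS(I)=1$.
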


\begin{proof}
	Consider instances with $m$ machines and jobs
	\[
	\{\Delta - \delta, \Delta - \delta, \Deltajobs{m-2}, \deltajobs{lm}\} 
	\]
	where $l$ is a (sufficiently large) positive integer and
	\begin{align*}
		\Delta =& \frac{m}{m+1}, & \mbox{ and }& & \delta =& \frac{\Delta}{lm}.
	\end{align*}
	First, optimal allocations have cost $1$ and can be obtained (only) by allocating one large job (size $\Delta$ or $\Delta-\delta$) in each machine as follows:
	\begin{eqnarray}
	\label{eq:opt} 
	[\Delta - \delta, \deltajobs l],[\Delta - \delta, \deltajobs l],[\Delta,\deltajobs l],\ldots,[\Delta,\deltajobs l]. 
	\end{eqnarray}
	(Note that $\Delta + l \delta =1$ and the cost is indeed $1$.)
	
	\noindent
	Second, there are Nash equilibria of cost $2(\Delta -\delta)$ which are of the form
	\begin{equation}
	{[\Delta - \delta, \Delta - \delta], [\delta,\ldots,\delta],[\Delta], \ldots, [\Delta].}
	\label{eq:apx}
	\end{equation}
	We shall prove that only allocations of cost $2(\Delta -\delta)$ are stochastically stable, thus implying 
	\[
	\ind~\PoS \geq 2(\Delta - \delta) = 2\Big(1 - \frac{1}{m+1}-\delta\Big).
	\] 
	The theorem thus follows by taking $\delta$ arbitrarily small, i.e., $l$ arbitrarily large. 
	
	\begin{claim}\label{cla:opt-no-stable}
		Every allocation with one large job per machine has a zero-waste path to every other state. 
	\end{claim}
	\begin{proof}[Proof of Claim] In any allocation that allocates one large job per machine, there exists a machine with load at most $\Delta - \delta+l\delta=1-\delta$. The following sequence of best-response moves leads to another allocation with one machine having zero load:
		\begin{align*}
			[\Delta - \delta, \deltajobs l]&, & [\Delta - \delta, \deltajobs l]&, & [\Delta,\deltajobs l], \ldots,& & [\Delta,\deltajobs l]&
			& \rightarrow& \\ 
			[\Delta - \delta, \deltajobs l]&, & [\Delta - \delta, \deltajobs{l(m-1)}]&, & [\Delta],\ldots,& & [\Delta]& & \rightarrow& \\ 
			[\Delta - \delta]&, & [\Delta - \delta]&, & [\Delta],\ldots&, & [\Delta,\deltajobs{lm}]& & \rightarrow& \\ 
			[\Delta - \delta, \Delta, \deltajobs{lm}]&, & [\Delta - \delta]&, & [\Delta], \ldots&, & [\textbf{empty}]& &
		\end{align*}
		The claim then follows by Lemma~\ref{le:universal-zero-paths}.
		 \end{proof}

	\begin{claim}\label{cla:apx-delta-paths} 
		Every Nash equilibrium of cost $2(\Delta-\delta)$  requires a minimal waste of $\delta$ and 
		there exists a path of waste $\delta$ from such an equilibrium to every other state.
	\end{claim}
	\begin{proof}[Proof of Claim]
		Every Nash equilibrium of cost $2(\Delta-\delta)$ must allocate two $(\Delta-\delta)$-jobs to the same machine and thus it must be of the form:
		\[
		[\Delta - \delta, \Delta - \delta], [\delta,\ldots,\delta],[\Delta], \ldots, [\Delta]. 
		\]
		Since the machine containing all $\delta$-jobs has load $\Delta$, every move of one (or more) job(s) requires a waste of $\delta$ or larger. To get a path of waste $\delta$, consider the following transitions:
		\begin{align*}
			[\Delta - \delta, \Delta - \delta]&, & [\delta,\ldots,\delta]&, & [\Delta], \ldots,& [\Delta]
			& \stackrel{(waste=\delta)}{\rightarrow}& \\ 
			[\Delta - \delta]&, & [\Delta - \delta, \delta,\ldots,\delta]&, & [\Delta], \ldots,& [\Delta]
			& \stackrel{(waste=0)}{\rightarrow}& \\ 
			[\Delta - \delta, \Delta - \delta, \delta,\ldots,\delta]&, & [\textbf{empty}]&, & [\Delta], \ldots,& [\Delta]
		\end{align*}
		and apply Lemma~\ref{le:universal-zero-paths} to obtain the rest of the path.
		\end{proof}
	
	To prove the theorem, let  $APX$ denote the approximate allocations of the form \eqref{eq:apx}. We use the two claims above to prove that for every $s \not \in APX$
	\begin{align}
		\label{eq:OPT-waste}
		W(s) & \geq |APX|\delta \\
		\intertext{while for every $s_{apx}\in APX$}
		W(s_{apx}) & \leq (|APX|-1)\delta
		\label{eq:APX-waste}
	\end{align}
	Inequality \eqref{eq:OPT-waste} follows from the fact that every state in $APX$ requires a waste of $\delta$ to leave, thus any tree towards $s$ must have waste $|APX|\delta$. For  \eqref{eq:APX-waste}, we argue as follows. First we build a partial tree from the $|APX|-1$ states in $APX$ to $s_{apx}$. Second, we observe that every non-Nash equilibrium has a zero-waste path to some Nash equilibrium. Third, we show that set of all Nash equilibria is the union of $APX$ and $OPT$, where $OPT$ are the allocations of cost $1$ (details in Appendix~\ref{sec:proof:th:lower-bound-PoS}). Then, by Claim~\ref{cla:opt-no-stable}, the paths from $OPT$ to $s_{apx}$ have zero waste.
	 \end{proof}

A matching upper bound follows from the known upper bound on \PoA (see Table~\ref{tbl:results}) and the fact that  $\ind~\PoS \leq \PoA$ (Theorem~\ref{th:revPoS:general}).

\begin{corollary}\label{cor:independent-bounds}
	For load balancing games,
	$$
	\ind~\PoS = 2\Big(1 - \frac{1}{m+1}\Big).
	$$
\end{corollary}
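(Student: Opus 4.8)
The plan is to obtain the corollary by sandwiching the quantity $\ind~\PoS$ between two matching bounds. The lower bound is exactly Theorem~\ref{th:lower-bound-PoS}, which already gives $\ind~\PoS \geq 2\Big(1 - \frac{1}{m+1}\Big)$, so nothing new is needed there. For the upper bound I would first invoke Theorem~\ref{th:revPoS:general}, which holds for any game and states that the stochastically stable set under independent learning always contains a Nash equilibrium, hence $\ind~\PoS \leq \PoA$. It then remains to feed in the classical (pure) price of anarchy for load balancing on $m$ identical machines, $\PoA \leq 2\Big(1 - \frac{1}{m+1}\Big)$, due to \cite{FinHor79} and recorded in Table~\ref{tbl:results}. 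Chaining these gives $\ind~\PoS \leq 2\Big(1 - \frac{1}{m+1}\Big)$, and combined with Theorem~\ref{th:lower-bound-PoS} this yields the claimed equality.

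For completeness I would recall (or cite) why the folklore bound $\PoA \leq 2\Big(1 - \frac{1}{m+1}\Big) = \frac{2m}{m+1}$ holds: fix a pure Nash equilibrium $s$, let $M$ be a most loaded machine with load $L = cost(s)$, and let $j$ be a lightest job on $M$. If $j$ is the only job on $M$, then $L = w_j \leq \mathit{opt}$ and the ratio is $1$; otherwise $w_j \leq L/2$, and Nash-stability forces every other machine to have load at least $L - w_j$. Summing loads against $n\cdot \mathit{opt}$ and using $\mathit{opt}\geq w_j$ gives the stated ratio after routine bookkeeping. Since this is standard, in the write-up I would simply cite \cite{FinHor79} rather than reproduce the computation.

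I do not expect any real obstacle here: the substantive content lives entirely in Theorem~\ref{th:lower-bound-PoS} (and in the folklore \PoA bound), and the only thing to verify is that both bounds are literally the same expression $2\Big(1 - \frac{1}{m+1}\Big)$, so no slack remains. The one point worth flagging is that, unlike for logit dynamics, stochastically stable states in these games need not be Nash equilibria (indeed the \ind~\PoA proof exhibits non-Nash stable states), so the naive argument "all stable states are Nash, hence the \PoA bound applies" fails; it is precisely Theorem~\ref{th:revPoS:general} — guaranteeing at least one Nash equilibrium among the stable states — that rescues the upper bound on the \emph{best} stable state, which is all the \PoS needs.
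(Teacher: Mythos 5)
Your proposal is correct and follows exactly the paper's argument: the lower bound is Theorem~\ref{th:lower-bound-PoS}, and the matching upper bound chains $\ind~\PoS \leq \PoA$ from Theorem~\ref{th:revPoS:general} with the classical bound $\PoA \leq 2\big(1 - \frac{1}{m+1}\big)$ of \cite{FinHor79}. Your remark that Theorem~\ref{th:revPoS:general} is what rescues the argument despite stable states not all being Nash equilibria is exactly the right point, and the sketch of the folklore \PoA bound is a harmless addition.
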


\section{Broadcast Network Design Games}
In this section we consider broadcast network design games studied in  \cite{KawMak13,MamMih15}. In these games, we are given a network in which edges have costs, and a set of players  located in some of the nodes. The players aim at connecting to a common terminal node $t$ and their strategy is to choose which route to use. The cost of each edge is then \emph{shared equally} among all players using that edge: if $n_e$ players select a route containing edge $e$ then each of them pays a share $c_e/n_e$ of the cost $c_e$ of the edge.  The cost for a player is the sum of the shares of the edges contained in the chosen route, and the social cost is the sum of the costs of the used edges.

Figure~\ref{fig:network-design:triangle}
shows an example from \cite{KawMak13} in which $s_1$ and $s_2$ denote the two players (their locations). We use this instance to show that also in these games synchronization is important for the quality of equilibria.

\begin{figure}\centering
	\includegraphics[scale=.7]{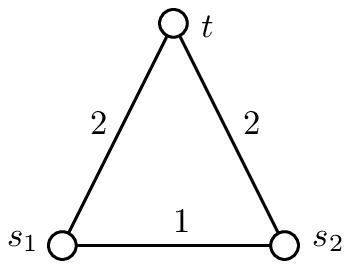}
	\includegraphics{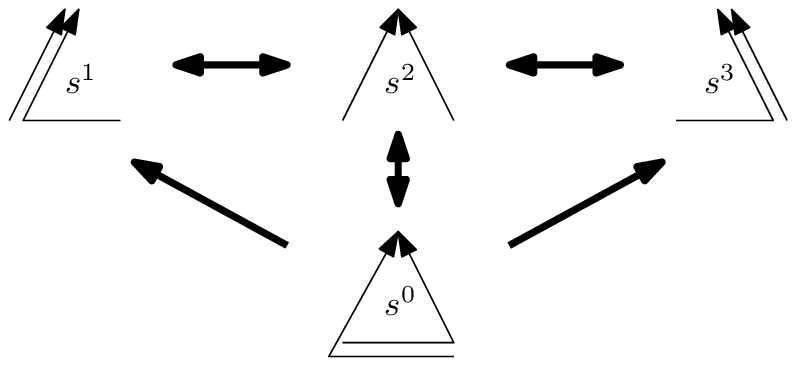}
	\caption{On the left, an instance of broadcast network design game for which a non-Nash equilibrium is also stochastically stable. On the right, the possible states for this instance (think arrows represent zero-waste transitions).}
	\label{fig:network-design:triangle}
	\label{fig:network-dsign:non-Nash-stable}
\end{figure}

\begin{theorem}\label{th:network-design-games}
	There exists an instance $I$ of a broadcast network design game for which $\PoA(I)=4/3$ but $$
	\ind~\PoA(I) = 5/3.
	$$
	In particular, some non-Nash equilibrium is stochastically stable.
\end{theorem}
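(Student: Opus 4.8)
The plan is to exhibit the small instance from Figure~\ref{fig:network-design:triangle} explicitly, verify the classical \PoA, and then show that a non-Nash state survives as stochastically stable by constructing a low-waste tree towards it and arguing that no cheaper escape is possible. Concretely, I would fix the network: two players $s_1,s_2$, a common terminal $t$, and a small ``triangle'' of edges (a direct edge from each player to $t$, plus a shared edge realized by routing through the other player's node or a common intermediate node) with costs chosen so that (i) there is a Nash equilibrium in which both players share the common edge, costing $4/3$ times the optimum, giving $\PoA(I)=4/3$ after checking no Nash equilibrium is worse; and (ii) there is a \emph{non-Nash} configuration of social cost $5/3$ times the optimum which nonetheless has minimum stochastic potential. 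I would write out the (constant-size) state space — there are only finitely many routing profiles for two players — list the feasible transitions, and compute the waste of each edge using the definitions of $W^{(J)}_{s,s'}$ and $W_{s,s'}$ from Section~\ref{sec:prely}.

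The core of the argument is a stochastic-potential comparison. For the candidate bad state $s^\star$ (the non-Nash, cost-$5/3$ state), I would build a directed tree $T^\star\in\mathcal T(s^\star)$ by identifying, from every other state, a path of small total waste into $s^\star$; the key observation is that $s^\star$, although not a Nash equilibrium, is ``hard to leave'' in the independent-learning dynamics because any single player's or any simultaneous pair's best-response deviation is not actually improving (or is blocked by the cost-sharing externality), so the waste of edges \emph{out of} $s^\star$ is large, while edges \emph{into} $s^\star$ are cheap — mirroring the phenomenon exploited in Theorem~\ref{th:ind-PoA} and Theorem~\ref{th:lower-bound-PoS}. Symmetrically, I would show that every genuine Nash equilibrium (in particular the good, cost-$1$ optimum and the cost-$4/3$ equilibrium) is ``easy to leave'': there is a zero-waste or tiny-waste multiple-best-response path from it towards $s^\star$, so any tree directed at a Nash equilibrium pays strictly more waste than $T^\star$. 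Invoking Theorem~\ref{th:stochastic-potential-minimizer}, $s^\star$ (and possibly some other states of the same cost) is then stochastically stable, while the optimum is not, so $\ind~\PoA(I)\ge 5/3$; the matching upper bound $\le 5/3$ comes from the fact that $5/3$ is the worst social cost ratio realized by \emph{any} state in this particular instance, which I check by inspection of the finite state space.

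The main obstacle I anticipate is the waste bookkeeping around $s^\star$: I need the waste of \emph{every} edge leaving $s^\star$ to be large enough that no tree into a Nash equilibrium can compete, and simultaneously the waste into $s^\star$ to be small. This requires a careful choice of the three (or four) edge costs so that the relevant best-response gaps $\max_{s_j''} u_j(s_j'',s_{-j}) - u_j(s_j',s_{-j})$ come out favorably for the whole tree, not just for one edge. A subtle point is that in independent learning a subset $J$ of players may move together, so $W_{s,s'}$ is a \emph{minimum} over admissible subsets; I must make sure that allowing simultaneous moves does not open up a cheap exit from $s^\star$ (it should not, precisely because the cost-sharing externality makes a coordinated deviation unprofitable once both players are on the shared edge). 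Once the edge costs are pinned down, the rest is a finite case check: enumerate states, enumerate feasible transitions, tabulate wastes, exhibit the optimal tree for $s^\star$, and lower-bound the optimal tree for each Nash state. I would also double-check that $s^\star$ is not itself a Nash equilibrium (so the ``in particular'' clause holds) and that $\PoA(I)=4/3$ exactly, i.e.\ the cost-$5/3$ state is genuinely not a Nash equilibrium and hence not counted in the classical \PoA.
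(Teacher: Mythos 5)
Your overall skeleton (use the triangle instance, enumerate the four routing profiles, tabulate wastes, invoke Theorem~\ref{th:stochastic-potential-minimizer}) is the right one, but the mechanism you put at the core of the argument cannot work and is not what happens in this instance. You want the non-Nash state $s^\star$ to be ``hard to leave'' (all outgoing edges of large waste) so that its supporting tree is strictly cheaper than that of every Nash equilibrium. Both halves fail. First, if $s^\star$ is not a Nash equilibrium then some player has a strictly improving deviation, and that single player's best response is a feasible transition of waste exactly zero (the summand $\max_{s_j''}u_j(s_j'',s_{-j})-u_j(s_j',s_{-j})$ vanishes when $s_j'$ is a best response); so a non-Nash state always admits a zero-waste exit, and your requirement that ``any single player's best-response deviation is not actually improving'' would make $s^\star$ a Nash equilibrium, destroying the ``in particular'' clause. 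Second, Theorem~\ref{th:revPoS:general} (proved in Appendix~\ref{app:non-Nash}) guarantees that some Nash equilibrium always attains the minimum stochastic potential, so no tree into $s^\star$ can be strictly cheaper than every tree into a Nash state, and in particular you cannot hope to make the optimum non-stable.

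The paper's proof uses the opposite mechanism: the bad state $s^0$ is trivially easy to leave (its exits to Nash equilibria have zero waste because the movers strictly improve), but it is also reached at zero waste, because there is a Nash equilibrium $s^2$ in which both players are exactly indifferent between their two strategies. Under independent learning the subset $J=\{1,2\}$ has positive revision probability, both players simultaneously best-respond, and the resulting profile is $s^0$ --- a zero-waste feasible transition that simply does not exist under asynchronous (logit) scheduling. Combining this with the zero-waste edges among the Nash states, every one of the four states has stochastic potential $0$, so \emph{all} of them, including the non-Nash $s^0$ of cost $5$, are stochastically stable; this gives $\ind~\PoA(I)=5/3$ while the worst Nash equilibrium still costs only $4$, so $\PoA(I)=4/3$. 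If you reorganize your finite case check around this ``simultaneous best response at an indifferent Nash equilibrium'' transition, rather than around a large exit waste from $s^\star$, the rest of your plan goes through.
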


\begin{proof}
	Consider the instance in Figure~\ref{fig:network-design:triangle}. Each player has two strategies and therefore there are only four possible states shown in Figure~\ref{fig:network-dsign:non-Nash-stable}. Note that the Nash equilibria are  $s^1, s^2$ and $s^3$, and $s^1$ and $s^3$ can be obtained from $s^2$ by a move of a single player whose cost remains $2$ in both cases (thus the move is a best-response). Therefore, transitions $(s^2,s^1)$ and  $(s^2,s^3)$ are zero-waste transitions. Moreover, starting from $s^2$, if both players move we obtain state $s^0$ and again a zero-waste transition $(s^2,s^0)$ (being each move a best-response). Finally, transitions from $s^0$ to any Nash equilibrium are zero-waste because the player(s) changing strategy actually improve. Since all transitions are zero-waste, we have $W(s)=0$ for all four states and therefore $s^0$ is stochastically stable. Since the cost of $s^0$ is $5$ and the optimum is $3$, the theorem follows.
	 \end{proof}

We next consider broadcast network design games \emph{over parallel links}: All players are located on \emph{the same node} $s$ and the graph consists of $m$ parallel links (representing resources of different costs) connecting to the common target $t$. 
Let  $\ell_1 \leq \ell_2 \leq \cdots \leq \ell_m$ be the costs of the links, and let $N_k$ denote the state in which all players choose the $k^{th}$ link in the order above (in particular $N_1$ is optimal and also a Nash equilibrium).  
The following  result provides sufficient conditions for which the dynamics will select only the optimum.

\begin{theorem}\label{th:broadcast:identical-locations}
	For network broadcast games over parallel links, if there is a unique shortest link ($\ell_1<\ell_2$) or all links have the same cost then 
	$$\ind~\PoA=1.$$
\end{theorem}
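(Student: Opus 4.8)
The plan is to apply the stochastic-potential characterisation (Theorem~\ref{th:stochastic-potential-minimizer}) after identifying the configurations in which the unperturbed (zero-waste) dynamics can get stuck. Write $n$ for the number of players, $H_t=\sum_{i=1}^{t}1/i$, and let $\Phi(s)=\sum_{e}\ell_e H_{n_e(s)}$ be the Rosenthal potential of this cost-sharing game, so that any single-player best response strictly decreases $\Phi$.

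The first step is to pin down the pure Nash equilibria. \emph{Every} equilibrium uses a single link: if links $k\ne l$ were both used, the no-deviation inequalities $\ell_k/n_k\le\ell_l/(n_l+1)$ and $\ell_l/n_l\le\ell_k/(n_k+1)$ would multiply to $(n_k+1)(n_l+1)\le n_k n_l$, i.e.\ $n_k+n_l+1\le0$. Hence the equilibria are exactly the states $N_k$ with $\ell_k\le n\ell_1$; $N_1$ is always one, and when $\ell_1<\ell_2$ it is the unique state of optimal cost $\ell_1$, whereas when all links cost the same every $N_k$ is optimal. Moreover each $N_k$ with $\ell_k<n\ell_1$ is a \emph{strict} equilibrium and hence an absorbing state of the zero-waste dynamics, while from any non-equilibrium a run of single-player best responses (each decreasing $\Phi$) reaches an equilibrium along a zero-waste path; in particular every state has a zero-waste path to some $N_k$. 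Given these two facts, a tree-surgery argument — remove from a minimal in-tree the edge leaving an absorbing $N_k$ (it has positive waste, since $N_k$ is absorbing) and splice in a zero-waste path to $N_k$ — shows that every state other than the $N_k$'s has stochastic potential strictly larger than some $N_k$. Thus $\ind~\PoA=1$ reduces to showing that the $N_k$ minimising the stochastic potential $W$ all have cost $\ell_1$.

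When all links cost the same this is immediate: the links are interchangeable, so $W(N_1)=\dots=W(N_m)$, every $N_k$ is stochastically stable, and all are optimal. When $\ell_1<\ell_2$ it remains to prove $W(N_1)<W(N_k)$ for every suboptimal equilibrium $N_k$. Since the relevant recurrence classes are the singletons $\{N_k\}$, a further tree-surgery argument (delete from a minimal in-tree rooted at $N_k$ the edge leaving $N_1$ and reattach through a cheapest path into $N_1$) reduces this to a single inequality on minimal path-wastes: for all suboptimal equilibria $N_i,N_j$ one has $w(N_j,N_1)<w(N_1,N_i)$, where $w(a,b)$ is the minimal waste of a path from $a$ to $b$. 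I would establish it via the telescoping identity for single-move paths — the waste of such a path equals $\Phi(\mathrm{end})-\Phi(\mathrm{start})+\sum_i\delta_i$, with $\delta_i\ge0$ the improvement forgone by the $i$-th mover — together with the explicit near-optimal strategy of moving players onto link $1$ one at a time until the rest strictly prefer link $1$ and then letting them flow in for free. This yields bounds of the form $w(N_j,N_1)\le(\ell_1-\ell_j)H_n+\mu_j$ and $w(N_1,N_i)\ge(\ell_i-\ell_1)H_n+\mu_i'$, whose difference carries the strictly positive term $(\ell_i+\ell_j-2\ell_1)H_n\ge2(\ell_2-\ell_1)H_n$ coming from the potential drop across link $1$.

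The crux, and the part that needs genuine work, is showing that this term dominates $\mu_j-\mu_i'$. Intuitively, dragging the dynamics off the optimum $N_1$ is expensive — the first several players it peels off link $1$ each forgo a payoff close to $\ell_1$ — whereas sliding from a suboptimal $N_k$ back down to $N_1$ is cheap, because once a few players have been nudged onto link $1$ everyone else prefers to follow; but the surplus is only an additive constant, so the harmonic-number estimates of $\mu_j$ and of the unavoidable escape cost $\mu_i'$ have to be carried out carefully. Everything else — the equilibrium characterisation, the recurrence structure, the exclusion of non-$N_k$ states, the equal-cost case, and the tree surgery — is routine.
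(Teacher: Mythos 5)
Your skeleton points in the same direction as the paper's proof --- both ultimately compare the cost of peeling players off link~1 one at a time (the $k$-th such move has waste $\ell_2/k-\ell_1/(n-k+1)$) against the cost of pushing $b_1$ players onto link~1 from a suboptimal state (the $k$-th such move has waste at most $\ell_1/k-\ell_2/(n-k+1)$) --- but there are two genuine gaps. First, the step you yourself call ``the crux'' is left undone, and it is essentially the entire technical content of the paper's argument: Lemma~\ref{le:radiou-coradius-NDE-game} carries out exactly the harmonic-sum comparison you defer, pairing the two sums term by term to show the escape cost exceeds the return cost by precisely $(\ell_2-\ell_1)H(n)>0$. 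Without that computation, your claim that the potential-drop term dominates $\mu_j-\mu_i'$ is an assertion, not a proof. (A further small hole in the same place: your telescoping identity for the waste of a path holds only for single-player moves, so the lower bound on $w(N_1,N_i)$ needs a separate argument covering feasible transitions in which several players revise simultaneously.)

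Second, the tree surgery you describe --- ``delete from a minimal in-tree rooted at $N_k$ the edge leaving $N_1$ and reattach through a cheapest path into $N_1$'' --- is unsound on trees over the full state space, which is how the stochastic potential $W$ is defined here. The cheapest feasible edge out of $N_1$ has waste only $\ell_2-\ell_1/n$ (a single player hops to link~2), whereas the spliced-in path has waste $w(N_k,N_1)$, which for $\ell_2$ close to $\ell_1$ and $n$ large behaves like $\ell_1H_{b_1}-\ell_2(H_n-H_{n-b_1})$ and can greatly exceed $\ell_2-\ell_1/n$; the exchange can therefore \emph{increase} the waste, and the reduction to $w(N_j,N_1)<w(N_1,N_i)$ does not follow. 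To repair it one must either delete the entire initial segment of the tree path from $N_1$ up to the point where it first exits the basin $B(N_1)$, reconnecting the stranded basin states to $N_1$ by zero-waste paths --- which is precisely the radius--coradius Lemma~\ref{le:radius-coradius} that the paper imports from \cite{AloNet10} --- or first establish the reduction of $W$ to trees over recurrent classes weighted by minimal path wastes, which you assert (``the relevant recurrence classes are the singletons'') but do not justify. The remaining ingredients --- the single-link characterisation of equilibria, the equal-cost case, and the exclusion of non-$N_k$ states --- are fine and are handled similarly (or more tersely) in the paper.
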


The proof uses the following \emph{radius-coradius} argument from \cite{AloNet10}. The \emph{basin of attraction} of a state $s$ is the set $B(s)$ of all states $s'$ such that there is a zero-waste path from $s$ to $s'$. The \emph{radius} $R(s)$ is the minimum number $r$ such that there is a path of waste $r$ from $s$ to some $s'\not \in B(s)$. The \emph{coradius} $CR(s)$ is the minimum number $cr$ such that, for every $s''\not \in R(s)$,  there is a path of waste at most $cr$ from $s''$ to $s$. 

\begin{lemma}[\citet{AloNet10}]\label{le:radius-coradius}
	If $R(s)>CR(s)$, then the set of stochastically stable states are exactly those in $L(s)$, where $L(s)$ is the set of $s'\in B(s)$ for which $s \in B(s')$.
\end{lemma}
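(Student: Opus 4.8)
The natural plan is to invoke the radius--coradius criterion (Lemma~\ref{le:radius-coradius}) with the target state $s=N_1$, the all-players-on-the-cheapest-link configuration, which for a unique shortest link is the \emph{unique} optimum (social cost $\ell_1$; every other state uses a costlier single link or several links, hence costs strictly more). First I would pin down the structural pieces. At $N_1$ any unilateral deviation to a link $j\neq 1$ raises a player's cost from $\ell_1/n$ to $\ell_j\ge \ell_2>\ell_1>\ell_1/n$, so $N_1$ is a strict Nash equilibrium with no outgoing zero-waste edge; thus $B(N_1)=\{N_1\}$ and $L(N_1)=\{N_1\}$. If I could show $R(N_1)>CR(N_1)$, the lemma would force the stable set to be exactly $\{N_1\}$, giving $\ind~\PoA=1$. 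The radius is immediate: the cheapest way out of $\{N_1\}$ is to force one player onto the second-cheapest link, and any multi-player or farther move costs at least as much, so $R(N_1)=\ell_2-\ell_1/n$.

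The entire difficulty concentrates in the coradius, which I expect to be the main obstacle. I would first reduce the states to consider: from any state, improving best responses (each a zero-waste move that strictly lowers the Rosenthal potential $\sum_k \ell_k H_{n_k}$) reach a Nash equilibrium, and a short case analysis shows \emph{no} equilibrium can split players across two links, so the only equilibria are the single-link states $N_k$ (with $N_k$ Nash precisely when $\ell_k\le n\ell_1$); moreover every multi-link state is transient under the zero-waste dynamics. It therefore suffices to reach $N_1$ from each $N_k$. Here is the trap: at $N_k$ the whole population shares the link cheaply ($\ell_k/n$ each), so no one individually wants to switch, and shifting players one at a time builds a barrier $\sum_a\bigl(\ell_1/(a{+}1)-\ell_k/(n{-}a)\bigr)$ that grows like $\ell_1\ln n$ and exceeds $R(N_1)$ for large $n$. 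In other words, the crude coradius bound is genuinely too weak, so the proof cannot rest on Lemma~\ref{le:radius-coradius} alone.

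To overcome this I would bypass the coradius and compare stochastic potentials directly via Theorem~\ref{th:stochastic-potential-minimizer}, which only needs $W(N_1)<W(N_k)$. Restricting to the two-link slice spanned by links $1$ and $k$, the state space is a birth--death line parametrised by the number $a$ of players on link $1$, with forced wastes $w_+(a)=\max\{0,\ell_1/(a{+}1)-\ell_k/(n{-}a)\}$ and $w_-(a)=\max\{0,\ell_k/(n{-}a{+}1)-\ell_1/a\}$. On a line the minimum-waste trees are forced, yielding the clean identity $W(N_1)-W(N_k)=\sum_a w_+(a)-\sum_a w_-(a)$. Pairing $w_+(b)$ with $w_-(n{-}b)$ and using $\ell_1<\ell_k$ shows each $w_+$ term is at most its matched $w_-$ term and strictly smaller at $b=0$, so $\sum_a w_+(a)<\sum_a w_-(a)$ and hence $W(N_1)<W(N_k)$. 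Strictness against the remaining multi-link states then follows because escaping any single-link Nash costs a strictly positive radius that a rerouting of its minimum-waste tree recovers. This pins $N_1$ as the unique stochastically stable state.

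The residual obstacle is to make this rigorous for general $m$, where the full state space is a grid rather than a line and the birth--death identity is unavailable; I would replace it by tree surgery (reverse the unique $N_1\!\to\!N_k$ branch of a minimum $N_k$-tree and argue the reversed branch, terminating at the cheap link, is strictly cheaper), taking care to confine the relevant path to the links $\{1,k\}$. Finally, the all-equal-cost case is easier and I would handle it by symmetry: every single-link state is simultaneously optimal and a strict, absorbing Nash equilibrium while every multi-link state admits an improving consolidation move, so the recurrent classes---and by symmetry the whole stable set---are exactly $N_1,\dots,N_m$, all of cost $\ell_1$, again giving $\ind~\PoA=1$.
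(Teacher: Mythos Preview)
Your proposal does not address Lemma~\ref{le:radius-coradius} at all. That lemma is quoted from \cite{AloNet10} as a black-box tool; the paper offers no proof of it, and neither do you. What you have written is a sketch for Theorem~\ref{th:broadcast:identical-locations}, which \emph{uses} the lemma. So as a proof of the stated lemma, the proposal is simply off-target.

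Read instead as a proof of Theorem~\ref{th:broadcast:identical-locations}, your plan diverges from the paper's and rests on a mistaken premise. You take $B(N_1)=\{N_1\}$ (arguing that a strict Nash equilibrium has no outgoing zero-waste edge), obtain $R(N_1)=\ell_2-\ell_1/n$, and then conclude that the coradius barrier from $N_k$ grows like $\ell_1\ln n$ and swamps the radius---forcing you to abandon Lemma~\ref{le:radius-coradius} for a direct stochastic-potential comparison that you yourself flag as incomplete for general $m$. But the basin of attraction is meant in the standard sense: $B(N_1)$ consists of all states with a zero-waste path \emph{into} $N_1$, namely every configuration in which link~1 already holds at least $b_1$ players (enough that joining link~1 is a best response for everyone else). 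With this much larger basin the paper's Lemma~\ref{le:radiou-coradius-NDE-game} shows by a clean telescoping computation that $R(N_1)-CR(N_1)=(\ell_2-\ell_1)H(n)>0$ whenever $\ell_1<\ell_2$, and Lemma~\ref{le:radius-coradius} applies directly with $L(N_1)=\{N_1\}$. Your detour through birth--death lines and tree surgery is therefore unnecessary; the ``trap'' you identify is an artifact of having shrunk the basin to a single point.
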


We next apply this result to our problem. 
\begin{lemma}\label{le:radiou-coradius-NDE-game}
	$R(N_1)-CR(N_1) = (\ell_2-\ell_1)H(n)$, where $H(n)=1+1/2+\cdots+1/n$ is the harmonic function. 
\end{lemma}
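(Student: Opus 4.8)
The plan is to evaluate the radius $R(N_1)$ and the coradius $CR(N_1)$ separately and subtract; the radius is routine, and the coradius is where the work lies. It will be convenient to use the exact Rosenthal potential of the game, which on parallel links is $\phi(s)=\sum_{k}\ell_k H(n_k(s))$, where $n_k(s)$ denotes the number of players on link $k$ in $s$.

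\textbf{The radius.} In $N_1$ every one of the $n$ players pays the share $\ell_1/n$, which is at most $\ell_1$ and hence at most the cost $\ell_k$ of any other link even when used alone. Thus no player, and no subset of players moving simultaneously, can weakly improve by leaving link $1$; so $B(N_1)=\{N_1\}$ and every edge out of $N_1$ carries strictly positive waste. The cheapest such edge moves a single player onto the link of cost $\ell_2$, with waste $\ell_2-\ell_1/n$ (that player's best response is still link $1$ at share $\ell_1/n$, whereas link $2$ used alone costs $\ell_2$). Any other edge — to a costlier link, or with several players moving — costs at least as much, and appending further transitions only adds waste. Hence $R(N_1)=\ell_2-\ell_1/n$.

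\textbf{The coradius.} Since $B(N_1)=\{N_1\}$, $CR(N_1)$ equals the maximum over states $s\neq N_1$ of the least waste of a path from $s$ to $N_1$. I would first show that the worst such $s$ is $N_2$: parking the whole population on a link costlier than $\ell_2$ only makes the status quo less attractive, hence cheaper to abandon, and a state that already has players on link $1$ (or spreads them over several links) is just the same consolidation process seen from a later stage; this comparison can be read off the potential $\phi$, whose unique minimiser is $N_1$ and whose next smallest value is attained at $N_2$. From $N_2$ the canonical return path moves players onto link $1$ one at a time: when link $1$ already holds $i$ players ($i=0,\dots,n-1$) and the remaining $n-i$ sit on link $2$ paying $\ell_2/(n-i)$ each, the next move onto link $1$ costs $\ell_1/(i+1)$, so its waste is $\max\{0,\ \ell_1/(i+1)-\ell_2/(n-i)\}$, which vanishes for the last moves since $\ell_1/(i+1)\le\ell_1\le\ell_2/(n-i)$. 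Summing the positive contributions gives
\[
\ell_1 H(n-1)-\ell_2\bigl(H(n)-1\bigr)\;=\;\Bigl(\ell_2-\frac{\ell_1}{n}\Bigr)-(\ell_2-\ell_1)H(n)\;=\;R(N_1)-(\ell_2-\ell_1)H(n),
\]
so $CR(N_1)\le R(N_1)-(\ell_2-\ell_1)H(n)$.

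\textbf{The main obstacle} is the matching lower bound: no path from $N_2$ to $N_1$ improves on the path above, and no starting state is worse than $N_2$. The plan is an amortisation along $\phi$. For a single-player transition $(s,s')$ where player $i$ moves one has the identity $W_{s,s'}=\phi(s')-\min_{x}\phi(x,s_{-i})$ — the waste is exactly the gap between the potential reached and the best potential $i$ could have reached from $s_{-i}$. One then argues that any route from $N_2$ to $N_1$ must, for each crowding level $i$ with $\ell_1/(i+1)>\ell_2/(n-i)$, perform some move whose attained potential exceeds the best available one by at least $\ell_1/(i+1)-\ell_2/(n-i)$, and that these deficits, being attached to distinct levels, accumulate; multi-player transitions and detours through other links have to be ruled out along the way. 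Given this, $CR(N_1)=R(N_1)-(\ell_2-\ell_1)H(n)$, and subtracting from $R(N_1)=\ell_2-\ell_1/n$ yields $R(N_1)-CR(N_1)=(\ell_2-\ell_1)H(n)$.
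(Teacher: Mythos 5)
There is a fatal gap, and you can detect it from your own formulas: you conclude $R(N_1)=\ell_2-\ell_1/n\le\ell_2$ while claiming $CR(N_1)=R(N_1)-(\ell_2-\ell_1)H(n)$. Since $(\ell_2-\ell_1)H(n)\to\infty$ as $n$ grows (for $\ell_1<\ell_2$), this makes the coradius negative for large $n$, which is impossible for a sum of wastes. The error is in the radius. You take $B(N_1)=\{N_1\}$, reading the basin of attraction as ``states reachable \emph{from} $N_1$ by zero-waste paths.'' The radius--coradius machinery (Ellison's, as used by Al\'os-Ferrer and Netzer, and as the paper's own proof actually uses it) needs the basin in the opposite direction: the set of states from which zero-waste (best-response) paths lead \emph{back to} $N_1$. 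That set is large: letting $b_1$ be the smallest integer with $\ell_1/(b_1+1)\le\ell_2/(n-b_1)$, every state with at least $b_1$ players on link~1 belongs to it, because link~1 is then a best response for every remaining player. Escaping this basin from $N_1$ requires evicting $n-b_1+1$ players, and the cheapest way (one at a time, onto link~2) gives $R(N_1)\ge\sum_{k=1}^{n-b_1}\bigl(\ell_2/k-\ell_1/(n-k+1)\bigr)$. This is where the harmonic sum lives; it does not come out of the coradius. The matching step is $CR(N_1)\le\sum_{k=1}^{b_1}\bigl(\ell_1/k-\ell_2/(n-k+1)\bigr)$ (push at most $b_1$ players onto link~1 one at a time, after which the basin is reached and the rest of the path is free), and the two sums telescope to exactly $(\ell_2-\ell_1)H(n)$ for any $b_1$.

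Two further problems in your coradius paragraph. First, ``summing the positive contributions'' of $\ell_1/(i+1)-\ell_2/(n-i)$ does not give $\ell_1H(n-1)-\ell_2(H(n)-1)$: the terms with $i\le n-2$ are not all positive (e.g.\ $i=n-2$ gives $\ell_1/(n-1)-\ell_2/2$, negative for large $n$), and the sum of positive parts is \emph{at least} the full sum, so you have produced a lower bound on the path waste where an upper bound on $CR$ is required; in fact the quantity you wrote down is itself negative for large $n$. Second, the ``main obstacle'' you identify --- an exact lower bound on $CR$ and the claim that $N_2$ is the worst starting state --- is not needed at all: the paper proves only the inequality $R(N_1)-CR(N_1)\ge(\ell_2-\ell_1)H(n)$ (which is all that Theorem~\ref{th:broadcast:identical-locations} uses), obtained by pairing a lower bound on $R$ with an upper bound on $CR$, so no amortisation along the potential and no identification of the worst state is required.
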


\begin{proof}
	Let $b_1$ be the smallest positive integer such that if link~1 is chosen by at least $b_1$ players, then moving to link~1 is a best-response for all other players. That is
	\[
	\frac{\ell_1}{b_1+1} \leq \frac{\ell_2}{n-b_1}.
	\] 
	\begin{claim}
		$
		R(N_1)\geq \left(\ell_2-\frac{\ell_1}{n}\right) + \left(\frac{\ell_2}{2}-\frac{\ell_1}{n-1}\right) + \cdots + \left(\frac{\ell_2}{n-b_1}-\frac{\ell_1}{b_1-1}\right).
		$
	\end{claim}
	\begin{proof}[Proof of Claim]
		By definition of $b_1$, in order to reach a state not in  $B(N_1)$ we have to move $n-b_1+1$ players from link~1. The minimal waste path for achieving this is to move these players one by one and all to link~2. Note that the $k^{th}$ move has waste $\frac{\ell_2}{k}-\frac{\ell_1}{n-k+2}$ and the last move ($k=n-b_1+1$) has zero waste by definition of $b_1$: indeed $\frac{\ell_1}{b_1}> \frac{\ell_2}{n-b_1+1}$. 
	\end{proof}
	
	\begin{claim}
		$
		CR(N_1)\leq \left(\ell_1-\frac{\ell_2}{n}\right) + \left(\frac{\ell_1}{2}-\frac{\ell_2}{n-1}\right) + \cdots + \left(\frac{\ell_1}{b_1}-\frac{\ell_2}{n-b_1+1}\right)
		$
	\end{claim}
	\begin{proof}[Proof of Claim]
		By definition of $b_1$, starting from any state, it is enough to move at most $b_1$ players into link~1 to obtain a state is in $B(N_1)$. We move players one by one. Suppose first that initially $\ell_1$ contains no players. Then the $k^{th}$ move has waste $\frac{\ell_1}{k} - \frac{\ell_{j_k}}{n_{j_k}}$, where $\ell_{j_k}\neq \ell_1$ is the link of the $k^{th}$ player we move, and $n_{j_k}$ is the number of players currently in that link. Since $\ell_{j_k}\geq \ell_2$ and $n_{j_k} \leq n - k +1$ (link $1$ contains $k-1$ before this move is done), the waste of each move is upper bounded by $\frac{\ell_1}{k} - \frac{\ell_2}{n-k+1}$ and a total of $b_1$ moves is enough. This gives the upper bound in the claim. Finally, if $\ell_1$ already contained some players in the initial state, then an even smaller waste is sufficient (it suffice to consider the first $n_1$ moves above as already done, where $n_1$ is the initial number of players in $\ell_1$). 
	\end{proof}
	By combining the two claims above and rearranging terms, we get $R(N_1)-CR(N_1) \geq \ell_2 + \frac{\ell_2}{2} + \cdots + \frac{\ell_2}{n} -(\ell_1 + \frac{\ell_1}{2} + \cdots + \frac{\ell_1}{n})$, which proves the lemma.
\end{proof}

\begin{proof}[Proof of Theorem~\ref{th:broadcast:identical-locations}]
	If all links have the same cost ($\ell_1=\ell_m$) then every state $N_k$ is optimal. Since these states are also strict Nash equilibrium, the convergence result in \cite{AloNet15} says that the stochastically stable states are contained in these strict Nash equilibria (formally, the game is weakly acyclic since every other state is not a Nash equilibrium).

	Now consider the case in which there is only \emph{one} optimal link ($\ell_1<\ell_2$), then we have $R(N_1)>CR(N_1)$ by Lemma~\ref{le:radiou-coradius-NDE-game}. Moreover $L(N_1)=\{N_1\}$ since this is a strict Nash equilibrium. We can thus apply Lemma~\ref{le:radius-coradius}.
	\end{proof}

\section{Conclusion and Open Questions}
In this work we have studied a refinement of the \PoA and \PoS based on the long-run equilibria in a general noisy best-response dynamics (a generalization of logit dynamics). The analysis is based on the characterization by \cite{AloNet10} and it provides new insights on the importance of the order in which players revise their strategies. 
On the one hand, restricting the analysis only to potential minimizers may  not be significant for other natural dynamics of the same type. On the other hand, a comparison between the two dynamics is useful to understand the importance of synchronous moves.

We remark that, in load balancing games,  the bounds for independent learning logit-response dynamics are in a sense the \emph{worst possible}.  First, $m$ is an upper bound to \emph{any} configuration, and therefore $\ind~\PoA\leq m$. Second, in every game we have $\ind~\PoS \leq \PoA$ (see Theorem~\ref{th:revPoS:general}). In load balancing games both inequalities are tight.

A natural alternative would be to consider the expected cost at stationary distribution \cite{AulFerPasPer13}. Here the main technical difficulty lies in obtaining the stationary distribution (see e.g. \cite{AulFerPasPenPer13} for such an issue). 
The advantage of stochastic stability is that it avoids an explicit computation of the stationary distribution.  However, the resulting  bounds can be considered too pessimistic in some cases. For instance, when each player is selected for revising her strategy with probability $p$ sufficiently small, the dynamics should be close to the logit dynamics. \citet{MarSha12} suggest to reduce to potential minimizers, that is, to show that stochastically stable states are potential minimizers also for logit-response dynamics with more general revision processes. We note that this approach requires additional hypothesis on the game (or on the players behavior). In particular, one should assume that players have a unique best response to have convergence to Nash equilibria \cite{CauDurGauTou14,AloNet15}. In this case, one gets $\ind~\PoA \leq \PoA$, and in some cases the results of \cite{MarSha12} say that $\ind~\PoA \leq \asy~\PoA$. 

The study of other problems considered in \cite{AsaSab09,KawMak13,MamMih15} is an interesting direction for future research. Some of these bounds on potential minimizers are already quite sophisticated, and stochastic stability in independent learning  may require non-trivial analysis.
Estimating \ind~\PoA and \PoS is still open even for ring networks \cite{MamMih15}.

\paragraph{Acknowledgments.} I am grateful to Francesco Pasquale for comments on an earlier version of this work, and to Carlos Alos-Ferrer for bringing \cite{CauDurGauTou14,AloNet15} to my attention.

\appendix
\section{Postponed Proofs and Additional Details}

\subsection{Proof of Theorem~\ref{th:revPoS:general}}\label{app:non-Nash}
	Observe that for any state $\bar s$ which is not a Nash equilibrium there exists a sequence of best-response moves that reaches some Nash equilibrium $s$. 
	That is, there exists a zero-waste path from any non-Nash $\bar s$ to some Nash $s$. 
	
	This implies that $W(\bar s) \geq W(s)$ because any waste tree directed to $\bar s$ can be transformed into a tree of no larger waste directed into $s$ (by adding the path mentioned above and removing other transitions). This shows that the waste $W(\bar s)$ cannot be smaller than the waste of all Nash equilibria.

\subsection{Postponed Details for the Proof of Theorem~\ref{th:lower-bound-PoS}}\label{sec:proof:th:lower-bound-PoS}
We need to prove that the set of all Nash equilibria is just the union of $OPT$ and $APX$, that is, the allocations of the form  \eqref{eq:opt} and \eqref{eq:apx}, respectively.

We first show that any allocation with a machine containing a $\Delta$-job and another large job (size $\Delta$ or $\Delta - \delta$) cannot be a Nash equilibrium. To be a Nash equilibrium, the load of every other machine should be at least $\Delta$. This will not be possible given the remaining jobs: If we have a machine with a $\Delta$-job and $\Delta-\delta$-job, then the  jobs that are left for the other $m-1$ machines are 
\[
\{\Delta - \delta, \Deltajobs{m-3}, \deltajobs{lm}\} 
\]
where the  $\delta$-jobs sum up to $\Delta$. Thus the total load we can distribute is $(m-1)\Delta - \delta$, and some machine will have load less than $\Delta$.  The same holds if we start with a machine having two $\Delta$-jobs. 

Finally, we  show that every allocation with one large job per machine has a zero-waste path to some allocation in $OPT$. Indeed, allocations with one large job per machine are of the form
\newcommand{\deltajobspar}[1]{\underbrace{\delta,\ldots,\delta}_{#1}}
\begin{eqnarray}
\label{eq:one-large-per-machine} 
[\Delta - \delta, \deltajobspar{k_1} ],[\Delta - \delta, \deltajobspar{k_2}],[\Delta,\deltajobspar{k_3}],\ldots,[\Delta,\deltajobspar{k_m}]. 
\end{eqnarray}
To obtain an allocation in $OPT$ it is enough to move the $\delta$-jobs from higher load machines to lower load machines (or to machines with the same load), until we obtain  $k_1=\cdots=k_m=l$, that is \eqref{eq:opt}.

\bibliographystyle{alpha}
\bibliography{stablePoA}

\end{document}